\newcommand{\cM}{\mathcal{M}}
\newcommand{\cG}{\mathcal{G}}
\renewcommand{\varepsilon}{\epsilon}
\newtheorem{theorem}{Theorem}
\newtheorem{lemma}{Lemma}
\newtheorem{definition}{Definition}
\theoremstyle{definition}
\title{An Algorithmic Framework for Strategic Fair Division\footnote{Br\^{a}nzei was supported
		by the Danish National Research Foundation
		and The National Science Foundation of China (under the grant 61361136003) for
		the Sino-Danish Center for the Theory of Interactive Computation, and the Center for
		Research in Foundations of Electronic Markets (CFEM), supported by the Danish
		Strategic Research Council. Br\^{a}nzei was also supported by ISF grant 1435/14 administered by the Israeli Academy of Sciences and Israel-USA Bi-national Science
		Foundation (BSF) grant 2014389, as well as the I-CORE Program of the Planning and Budgeting Committee and The Israel
		Science Foundation. Kurokawa and Procaccia were partially supported by the National Science Foundation under grants CCF-1215883, CCF-1525932, and IIS-1350598, and by a Sloan Research Fellowship.
		Caragiannis was partially supported by Caratheodory research grant E.114 from the University of Patras.
	}}
\author{
	\textbf{Simina Br\^anzei}\footnote{E-mail: \texttt{simina.branzei@gmail.com}}\\
	Hebrew University of Jerusalem\\
	\newline
	\and
	\textbf{Ioannis Caragiannis}\footnote{E-mail: \texttt{caragian@ceid.upatras.gr}}\\
	University of Patras\\
	\and
	\textbf{David Kurokawa}\footnote{E-mail: \texttt{dkurokaw@cs.cmu.edu}}\\
	Carnegie Mellon University\\
	\and
	\textbf{Ariel D. Procaccia}\footnote{E-mail: \texttt{arielpro@cs.cmu.edu}}\\
	Carnegie Mellon University
}
\date{}
\begin{document}
	\maketitle

\begin{abstract}

We study the paradigmatic fair division problem of allocating a divisible good among agents with heterogeneous preferences, commonly known as \emph{cake cutting}. Classical cake cutting protocols are susceptible to manipulation. Do their strategic outcomes still guarantee fairness? 

To address this question we adopt a novel algorithmic approach, by designing a concrete computational framework for fair division --- the class of \emph{generalized cut and choose (GCC) protocols} --- and reasoning about the game-theoretic properties of algorithms that operate in this model. The class of GCC protocols includes the most important discrete cake cutting protocols, and turns out to be compatible with the study of fair division among strategic agents. In particular, GCC protocols 
are guaranteed to have approximate subgame perfect Nash equilibria, or even exact equilibria if the protocol's tie-breaking rule is flexible. We further observe that the (approximate) equilibria of proportional GCC protocols --- which guarantee each of the $n$ agents a $1/n$-fraction of the cake --- must be (approximately) proportional. Finally, we design a protocol in this framework with the property that its Nash equilibrium allocations coincide with the set of (contiguous) envy-free allocations.
\end{abstract}

\newpage

\section{Introduction}

How should one allocate resources among economic agents with heterogeneous preferences, in a way that is fair to everyone involved? 
The rigorous study of this question dates back to the 1940's; it has led to an extensive body of literature in economics, mathematics, and political science, with several books written on the topic~\cite{Moul03,RW98,BT96,Young94}. More recently, fair division has become an important problem in computer science~\cite{Pro09,CLP11,CLPP11,DFHK+11,BFML+12,BCHT+12,Cavallo12,ADH13,KLP13,BPZ13,CLPP13,Pro13,PW14,BBKP+14,SHA15,SHA15b,ABFF15,AMNS15,CG15,AM16}; specific focus areas include the allocation of computational resources~~\cite{GZHK+11,GN12,KPS13,CGG13,PPS15}, and the implementation and deployment of practical, provably fair solutions for real-world problems~\cite{OSB10,AAGW15,GP14,KPS15,GMPZ16,CKMP+16}.

Among problems in fair division, the so-called \emph{cake cutting} problem is perhaps the most paradigmatic. It was formalized by Steinhaus~\cite{Stein48} during World War II and studied in a rich body of literature over the years. The \emph{cake} is a metaphor for a heterogeneous divisible resource, such as land, time, memory in shared computing systems, clean water, greenhouse gass emissions, or fossil fuels. 

Going back to the word ``fair'', two formal notions of fairness have emerged as the most appealing and well-studied in the context of cake cutting: \emph{proportionality}, in which each of the $n$ agents receives at least a $1/n$-fraction of the entire cake according to its valuation; and \emph{envy-freeness}, which stipulates that no agent would wish to swap its own piece with that of another agent. At the heart of the cake cutting endeavor is the design of cake cutting \emph{protocols}, which specify an interaction between agents --- typically via iterative steps of manipulating the cake --- such that the final allocation is guaranteed to be proportional or envy-free.

The simplest cake cutting protocol is known as \emph{cut and choose}, and is designed for two agents. The first agent cuts the cake in two pieces that it values equally; the second agent then chooses the piece that it prefers, leaving the first agent with the remaining piece. It is easy to see that this protocol yields a proportional and envy-free allocation (in fact these two notions coincide when there are only two agents and the entire cake is allocated). However, taking a game-theoretic point of view, it is immediately apparent that the agents can often do better by disobeying the protocol when they know each other's valuations. For example, in the cut and choose protocol, assume the first agent only desires a specific small piece of cake, whereas the second agent uniformly values the cake. The first agent can obtain its entire desired piece, instead of just half of it, by carving that piece out.

So how would strategic agents behave when faced with the cut and choose protocol? A standard way of answering this question employs the notion of \emph{Nash equilibrium}: each agent would use a strategy that is a best response to the other agent's strategy. 
To set up a Nash equilibrium, suppose that the first agent cuts two pieces that the second agent values equally; the second agent selects its more preferred piece, and the one less preferred by the first agent in case of a tie. Clearly, the second agent cannot gain from deviating, as it is selecting a piece that is at least as preferred as the other. As for the first agent, if it makes its preferred piece even bigger, the second agent would choose that piece, making the first agent worse off. 
Interestingly enough, in this equilibrium the tables are turned; now it is the second agent who is getting exactly half of its value for the whole cake, while the first agent generally gets more. Crucially, the equilibrium outcome is also proportional and envy-free. In other words, even though the agents are strategizing rather than following the protocol, the outcome in equilibrium has the same fairness properties as the ``honest'' outcome! 

With this motivating example in mind, we would like to make general statements regarding the equilibria of cake cutting protocols. We wish to identify a general family of cake cutting protocols --- which captures the classic cake cutting protocols --- so that each protocol in the family is guaranteed to possess (approximate) equilibria. Moreover, we wish to argue that these equilibrium outcomes are fair. Ultimately, our goal is to be able to reason about the fairness of cake divisions that are obtained as outcomes when agents are presented with a standard cake cutting protocol and behave strategically. 

\subsection{Model and Results}

To set the stage for a result that encompasses classic cake cutting protocols, we introduce (in Section~\ref{sec:model}) the class of \emph{generalized cut and choose (GCC)} protocols. A GCC protocol is represented by a tree, where each node is associated with the action of an agent. The tree has two types of nodes: a \emph{cut node}, which instructs the agent to make a cut between two existing cuts; and a \emph{choose} node, which offers the agent a choice between a collection of pieces that are induced by existing cuts. Moreover, we assume that the progression from a node to one of its children depends only on the relative positions of the cuts (in a sense to be explained formally below). We argue that classic protocols --- such as Dubins-Spanier~\cite{DS61}, Selfridge-Conway (see \cite{RW98}), Even-Paz~\cite{EP84}, as well as the original cut and choose protocol --- are all GCC protocols.

We view the definition of the class of GCC protocols as one of our main conceptual contributions, since cake cutting 
protocols have not enjoyed a computational model until this work. 

In Section~\ref{sec:exist}, we observe that GCC protocols may not have exact Nash equilibria (NE), then explore ways of circumventing this issue, which give rise to our first two main results. 
\begin{itemize}
	\item We prove that every GCC protocol has at least one $\varepsilon$-NE for every $\varepsilon>0$, in which agents cannot gain more than $\varepsilon$ by deviating, and $\varepsilon$ can be chosen to be arbitrarily small. In fact, we establish this result for a stronger equilibrium notion, (approximate) \emph{subgame perfect Nash equilibrium (SPNE)}, which is, intuitively, a strategy profile where the strategies are in NE even if the game starts from an arbitrary point. 
	
	\item We slightly augment the class of GCC protocols by giving them the ability to make \emph{informed tie-breaking} decisions that depend on the entire history of play, in cases where multiple cuts are made at the exact same point. While, for some valuation functions of the agents, a GCC protocol may not possess any exact SPNE, we prove that it is always possible to modify the protocol's tie-breaking scheme to obtain SPNE. 
\end{itemize}

In Section~\ref{sec:ef}, we observe that for any proportional protocol, the outcome in any $\varepsilon$-equilibrium must be an $\varepsilon$-proportional division. We conclude that under the classic cake cutting protocols listed above --- which are all proportional --- strategic behavior preserves the proportionality of the outcome, either approximately, or exactly under informed tie-breaking. 

One may wonder, though, whether an analogous result is true with respect to envy-freeness. We give a negative answer, by constructing an envy-inducing SPNE under the Selfridge-Conway protocol, a well-known envy-free protocol for three agents. 

However, our third main result is the construction of a GCC protocol in which every NE outcome is a contiguous envy-free allocation and vice versa, that is, the set of NE outcomes coincides with the set of contiguous envy-free allocations. This shows that the GCC framework is compatible with arguably the most important notion of fairness, namely envy-freeness. It remains open whether a similar result can be obtained for SPNE instead of NE. 


\subsection{Related Work}

The notion of GCC protocols is inspired by the Robertson-Webb~\cite{RW98} model of cake cutting --- a concrete query model that specifies how a cake cutting protocol may interact with the agents. This model underpins a significant body of work in theoretical computer science and artificial intelligence, which focuses on the complexity of achieving different fairness or efficiency notions in cake cutting~\cite{EP06b,EP06,WS07,DQS12,ADH13,Pro09,KLP13}. In Section~\ref{sec:model}, we describe the Roberston-Webb model in detail, and explain why it is inappropriate for reasoning about equilibria. 

In the context of the strategic aspects of cake cutting, Nicol\`o and Yu~\cite{NY08} were the first to suggest equilibrium analysis for cake cutting protocols. Focusing exclusively on the case of two agents, they design a specific cake cutting protocol whose unique SPNE outcome is envy-free. And while the original cut and choose protocol also provides this guarantee, it is not ``procedural envy free'' because the cutter would like to exchange roles with the chooser; the two-agent protocol of Nicol\'o and Yu aims to solve this difficulty. Br\^anzei and Miltersen~\cite{BM13} also investigate equilibria in cake cutting, but in contrast to our work they focus on one cake cutting protocol --- the Dubins-Spanier protocol --- and restrict the space of possible strategies to \emph{threshold strategies}. Under this assumption, they characterize NE
outcomes, and in particular they show that in NE the allocation is envy-free. Br\^anzei and Miltersen also prove the existence of $\varepsilon$-equilibria that are $\varepsilon$-envy-free; again, this result relies on their strong restriction of the strategy space, and applies to one specific protocol.

Several papers by computer scientists~\cite{CLPP13aaai,MT10,MN12} take a mechanism design approach to cake cutting; their goal is to design cake cutting protocols that are \emph{strategyproof}, in the sense that agents can never benefit from manipulating the protocol. This turns out to be an almost impossible task~\cite{Zhou91,BM15}; positive results are obtained by either making extremely strong assumptions (agents' valuations are highly structured), or by employing randomization and significantly weakening the desired properties. In contrast, our main results, given in Section~\ref{sec:exist}, deal with strategic outcomes under a large class of cake cutting protocols, and aim to capture well-known protocols; our result of Section~\ref{sec:ef} is a positive result that achieves fairness ``only'' in equilibrium, but without imposing any restrictions on the agents' valuations.


\section{The Model}
\label{sec:model}


The cake cutting literature typically represents the cake as the interval $[0,1]$. There is a set of agents $N=\{1,\ldots,n\}$, and each agent $i\in N$ is endowed with a \emph{valuation function} $V_i$ that assigns a value to every subinterval of $[0,1]$. These values are induced by a non-negative continuous \emph{value density function} $v_i$, so that for an interval $I$, $V_i(I)=\int_{x\in I} v_i(x)\ dx$. By definition, $V_i$ satisfies the first two properties below; the third is an assumption that is made w.l.o.g. 
\begin{enumerate}
	\item Additivity: For every two disjoint intervals $I_1$ and $I_2$, $V_i(I_1\cup  I_2) = V_i(I_1) + V_i(I_2)$.
	\item Divisibility: For every interval $I\subseteq [0,1]$ and $0\leq \lambda\leq 1$ there is a subinterval $I'\subseteq I$ such that $V_i(I') = \lambda V_i(I)$.
	\item Normalization: $V_i([0,1])=1$.
\end{enumerate}

Note that valuation functions are non-atomic, i.e., they assign zero value to points. This allows us to disregard the boundaries of intervals, and in particular we treat intervals that overlap at their boundary as disjoint. 
We sometimes explicitly assume that the value density functions are \emph{strictly positive}, i.e., $v_i(x)>0$ for all $x\in[0,1]$ and for all $i \in N$; this implies that $V_i([x,y])>0$ for all $x <y, x,y \in[0,1]$.

A \emph{piece of cake} is a finite union of disjoint intervals. We are interested in allocations of disjoint pieces of cake $X_1,\ldots,X_n$, where $X_i$ is the piece allocated to agent $i\in N$. A piece is \emph{contiguous} if it consists of a single interval.

We study two fairness notions. An allocation $X$ is \emph{proportional} if for all $i\in N$, $V_i(X_i)\geq 1/n$; and \emph{envy-free} if for all $i,j\in N$, $V_i(X_i)\geq V_i(X_j)$. Note that envy-freeness implies proportionality when the entire cake is allocated.

\subsection{Generalized Cut and Choose Protocols}
\label{sec:gcc}

The standard communication model in cake cutting was proposed by Robertson and Webb \cite{RW98}; it restricts the interaction between the protocol and agents to two types of queries:
\begin{itemize}
	\item \emph{Cut} query: $\text{\emph{Cut}}_{i}(x, \alpha)$ asks agent $i$ to return a point $y$ such that $V_{i}([x,y]) = \alpha$.
	\item \emph{Evaluate} query: $\text{\emph{Evaluate}}_{i}(x, y)$ asks agent $i$ to return a value $\alpha$ such that $V_{i}([x,y]) = \alpha$, where the points $x,y$ are either $0,1$, or have been generated as answers to previous \emph{Cut} queries.
	
\end{itemize}
Note that in the RW model,
a protocol could allocate pieces depending on whether a particular cut was made at a specific point (see Algorithm \ref{alg:nonGCC}). More generally, a protocol in the RW model has a property such as envy-freeness if, roughly speaking, it gathers enough information so that there \emph{exists} an allocation such that for \emph{any} valuation function consistent with the answers to the queries, the allocation is envy-free. Since the RW model does not specify how the allocation is computed, there need not exist a succinct representation of the allocation that arises as the outcome of a protocol, which makes it difficult to analyze the strategic properties of protocols in the RW model. 

%

For this reason, we define a generic class of protocols that are implementable with natural operations, which capture all bounded\footnote{In the sense that the number of operations is upper-bounded by a function that takes the number of agents $n$ as input.} and discrete cake cutting algorithms,
such as cut and choose, Dubins-Spanier, Even-Paz, Successive-Pairs, and Selfridge-Conway (see, e.g., \cite{Pro13}). At a high level, the standard protocols
are implemented using a sequence of natural instructions, each of which is either a
\emph{Cut} operation, in which some agent is asked to make a cut
in a specified region of the cake; or a \emph{Choose} operation, in which some agent is asked to take a piece from a set of already demarcated
pieces indicated by the protocol.
In addition, every node in the decision tree of the protocol is based exclusively on the execution history and absolute ordering of the cut points,
which can be verified with any of the following operators: $<, \leq, = , \geq, >$.

Formally, a \emph{generalized cut and choose (GCC)} protocol is implemented exclusively with the following instructions:
\begin{itemize}
	\item \emph{Cut}: The syntax is ``$i$ \emph{Cuts} in $S$'',
	where $S = \{[x_1, y_1], \ldots, [x_m, y_m]\}$ is a set of contiguous pieces (intervals), such that the endpoints of every piece $[x_j, y_j]$ are $0, 1$, or cuts made in previous steps of the protocol. Agent $i$ can make a cut at any point $z \in [x_j, y_j]$, for some $j \in \{1, \ldots, m\}$.
	
	\item \emph{Choose}: The syntax is ``$i$ \emph{Chooses} from $S$'',
	where $S = \{[x_1, y_1], \ldots, [x_m, y_m]\}$ is a set of contiguous pieces, such that the endpoints of every piece $[x_j, y_j] \in S$ are $0$, $1$, or cuts made in the previous steps of the protocol. Agent $i$ can choose any \emph{single} piece $[x_j, y_j]$ from $S$ to keep from that point on.
	
	\item \emph{If-Else Statements}: The conditions depend on the result of choose queries and the absolute order of all the cut points made in the previous steps.
\end{itemize}


A GCC protocol uniquely identifies every contiguous piece by the symbolic names of all the cut points contained in it.
For example, Algorithm 1 is a GCC protocol. Algorithm \ref{alg:nonGCC} is not a GCC protocol, because
it verifies that the point where agent $1$ made a cut is exactly $1/3$, whereas a GCC protocol can only
verify the ordering of the cut points relative to each other and the endpoints of the cake. Note that, unlike in the 
communication model of Robertson and Web \cite{RW98}, GCC protocols cannot obtain and use information about the valuations of the agents --- the allocation is only decided by the agents' \emph{Choose} operations.

\begin{algorithm}[t]
	\begin{algorithmic}
		\STATE agent $1$ \emph{Cuts} in $\{[0,1]\}$ // $@x$
		\STATE agent $1$ \emph{Cuts} in $\{[0,1]\}$ // $@y$
		\STATE agent $1$ \emph{Cuts} in $\{[0,1]\}$ // $@z$
		\IF {$\left(x < y < z \right)$}
		\STATE agent $1$ \emph{Chooses} from $\{ [x,y], [y, z]\}$
		\ENDIF
	\end{algorithmic}
	\caption{A GCC protocol. The notation ``// $@x$'' assigns the symbolic name $x$ to the cut point made by agent 1.}
	\label{alg:GCC_protocol}
\end{algorithm}

\begin{algorithm}[t]
	\begin{algorithmic}
		\STATE agent $1$ \emph{Cuts} in $\{[0,1]\}$ // $@x$
		\IF{$\left(x = \frac{1}{3}\right)$}
		\STATE agent $1$ \emph{Chooses} from $\{[0, x], [x,1]\}$
		\ENDIF
	\end{algorithmic}
	\caption{A non-GCC protocol.}
	\label{alg:nonGCC}
\end{algorithm}

As an illustrative example, we now discuss why the discrete variant of the Dubins-Spanier protocol\footnote{In fact, the discrete variant of Dubins-Spanier was invented much earlier by Banach and Knaster and is better known as the ``last diminisher'' procedure (see Steinhaus \cite{Stein48}).} belongs to the class of GCC protocols --- but first we must describe the original Dubins-Spanier protocol. Dubins-Spanier is a proportional (but not envy-free) protocol for $n$ agents, which operates in $n$ rounds. In round 0, each agent makes a mark $x_{i}^{1}$ such that the piece of cake to the left of the mark is worth $1/n$, i.e., $V_i([0,x_i^1])=1/n$. Let $i^*$ be the agent that made the leftmost mark; the protocol allocates the interval $[0,x_{i^*}^1]$ to agent $i^*$; the allocated interval and satisfied agent are removed. In round $t$, the same procedure is repeated with the remaining $n-t$ agents and the remaining cake. When there is only one agent left, it receives the remaining cake. To see why the protocol is proportional, first note that in round $t$ the remaining cake is worth at least $1-t/n$ to each remaining agent, due to the additivity of the valuation functions and the fact that the pieces allocated in previous rounds are worth at most $1/n$ to these agents. The agent that made the leftmost mark receives a piece that it values at $1/n$. In round $n-1$, the last agent is left with a piece of cake worth at least $1-(n-1)/n=1/n$. 

The protocol admits a GCC implementation
as follows. For the first round, each agent $i$ is required to make a cut in $\{[0,1]\}$, at some point denoted by $x_{i}^{1}$. The agent $i^*$ with the leftmost cut $x_{i^*}^{1}$ can be determined using \emph{If-Else} statements whose conditions only
depend on the ordering of the cut points $x_{1}^{1}, \ldots, x_{n}^{1}$.
Then, agent $i^*$ is asked to choose ``any'' piece in the singleton set $\{[0,x_{i^*}^{1}]\}$.
The subsequent rounds are similar: at the end of every round the agent that was allocated a piece is removed, and the protocol iterates on the remaining agents and remaining cake. Note that agents are not constrained to follow the protocol, i.e., they can make their marks (in response to cut instructions) wherever they want; nevertheless, an agent can guarantee a piece of value at least $1/n$ by following the Dubins-Spanier protocol, regardless of what other agents do. 

%

While GCC protocols are quite general, a few well-known cake cutting protocols are beyond their reach. For example, the Brams-Taylor~\cite{BT95} protocol is an envy-free protocol for $n$ agents, and although its individual operations are captured by the GCC formalism, the number of operations is not bounded as a function of $n$ (i.e., it may depend on the valuation functions themselves). Its representation as a GCC protocol would therefore be infinitely long. 
In addition, some cake cutting protocols use \emph{moving knives} (see, e.g., \cite{BTZ97}); for example, they can keep track of how an agent's value for a piece changes as the piece smoothly grows larger. These protocols are not discrete, and, in fact, cannot be implemented even in the Robertson-Webb model. 

We also note that the GCC model is \emph{incomparable} to the RW model. Indeed, given a protocol in the RW model, it may not be possible to implement it as a GCC protocol because the RW model does not indicate a specific allocation, as discussed above. Conversely, cut queries in the GCC model cannot in general be translated into cut queries in the RW model, as in the latter model cuts are associated with a specific value. 

\subsection{The Game}

We study GCC protocols when the agents behave strategically. Specifically, we consider a GCC protocol, coupled with the valuation functions of the agents, 
as an \emph{extensive-form game of perfect information} (see, e.g., \cite{SL08}). In such a game, agents execute the \emph{Cut} and \emph{Choose} instructions strategically. 
Each agent is fully aware of the valuation functions of the other agents and aims to optimize its overall utility for the chosen pieces, given the strategies of other agents. 

While the perfect information model may seem restrictive, the same assumption is also made in previous work on equilibria in cake cutting~\cite{NY08,BM13}. More importantly, it underpins foundational papers in a variety of areas of microeconomic theory, such as the seminal analysis of the Generalized Second Price (GSP) auction by Edelman et al.~\cite{EOS07}. A common justification for the complete information setting, which is becoming increasingly compelling as access to big data gets pervasive, is that agents can obtain significant amounts of information about each other from historical data. 

In more detail, the game can be represented by a tree (called a \emph{game tree}) with Cut and Choose nodes:
\begin{itemize}
	\item In a \emph{Cut} node defined by ``\emph{$i$ cuts in $S$}'', where $S = \{[x_1, y_1], \ldots, [x_m, y_m]\}$, the strategy space of agent $i$ is the set $S$ of points where $i$ can make a cut at this step.
	\item  In a \emph{Choose} node defined by ``\emph{$i$ chooses from $S$}'', where $S = \{[x_1, y_1], \ldots, [x_m, y_m]\}$, the strategy space is the set $\{1, \ldots, m\}$, i.e., the indices of the pieces that can be chosen by the agent from the set $S$.
\end{itemize}

The strategy of an agent defines an action for \emph{each} node of the game tree where it executes a \emph{Cut} or a \emph{Choose} operation. If an agent deviates, the game can follow a completely different branch of the tree, but the outcome will still be well-defined. 

The strategies of the agents are in \emph{Nash equilibrium (NE)} if no agent can improve its utility by unilaterally deviating from its current strategy, i.e., by cutting at a different set of points and/or by choosing different
pieces. A \emph{subgame perfect Nash equilibrium (SPNE)} is a stronger equilibrium notion, which means that the strategies are in NE in every subtree of the game tree. In other words, even if the game started from an arbitrary node of the game tree, the strategies would still be in NE. An \emph{$\varepsilon$-NE} (resp., $\varepsilon$-SPNE) is a relaxed solution concept where an agent cannot gain more than $\varepsilon$ by deviating (resp., by deviating in any subtree).


\section{Existence of Equilibria}
\label{sec:exist}

It is well-known that finite extensive-form games of perfect information can be solved using \emph{backward induction}: starting from the leaves and progressing towards the root, at each node the relevant agent chooses an action that maximizes its utility, 
given the actions that were computed for the node's children. The induced strategies form an SPNE. Unfortunately, although we consider finite GCC protocols, we also need to deal with \emph{Cut} nodes where the action space is infinite, hence na\"ive backward induction does not apply. 

In fact, it turns out that not every GCC protocol admits an exact NE --- not to mention SPNE. For example, consider Algorithm 1, and assume that the value density function of agent 1 is strictly positive.
Assume there exists a NE where agent $1$ cuts at  $x^{*}, y^{*}, z^{*}$, respectively,
and chooses the piece $[x^{*}, y^{*}]$.
If $x^{*} > 0$, then the agent can improve its utility by making the first cut at $x' = 0$ and choosing the piece $[x', y^{*}]$,
since $V_1([x', y^{*}]) > V_{1}([x^{*}, y^{*}])$. Thus, $x^{*} = 0$.
Moreover, it cannot be the case that $y^{*} = 1$, since the agent only receives an allocation
if $y^{*} < z^{*} \leq 1$. Thus, $y^{*} < 1$. Then, by making the second cut at any $y' \in (y^{*}, z^{*})$, agent $1$ can obtain
the value
$V_{1}([0, y']) > V_{1}([0, y^{*}])$.
It follows that there is no exact NE where the agent chooses the first piece. Similarly, it can be shown that there is no exact NE where the
agent chooses the second piece, $[y^{*}, z^{*}]$. This illustrates why backward induction does not apply: the maximum value at some \emph{Cut} nodes may not be well defined.

\subsection{Approximate SPNE}
\label{subsec:approx}

One possible way to circumvent the foregoing example is by saying that agent 1 should be happy to make the cut $y$ very close to $z$. For instance, if the agent's value is uniformly distributed over the case, cutting at $x=0, y=1-\epsilon, z=1$ would allow the agent to choose the piece $[x,y]$ with value $1-\epsilon$; and this is true for any $\epsilon$. 

More generally, we have the following theorem. 


\begin{theorem} \label{thm:existence}
	For any $n$-agent GCC protocol $\mathcal{P}$ with a bounded number of steps, any $n$ valuation functions $V_1,\ldots,V_n$, and any $\epsilon>0$, the game induced by $\mathcal{P}$ and $V_1,\ldots,V_n$ has an $\varepsilon$-SPNE.
\end{theorem}

The proof of Theorem~\ref{thm:existence} is relegated to \ref{app:existence}. In a nutshell, the high-level idea of our proof relies on discretizing the cake --- such that every cell in the resulting grid has a very small value for each agent --- and computing the optimal outcome on the discretized cake using backward induction. At every cut step of the protocol, the grid is refined by adding a point between every two consecutive points of the grid from the previous cut step. This ensures that any ordering of the cut points that can be enforced by playing on the continuous cake can also be enforced on the discretized instance. Therefore, for the purpose of computing an approximate SPNE, it is sufficient to work with the discretization. We then show that the backward induction outcome from the discrete game gives an $\varepsilon$-SPNE on the continuous cake.

\subsection{Informed Tie-Breaking}
\label{subsec:ties}

Another approach for circumventing the example given at the beginning of the section is to change the \emph{tie-breaking} rule of Algorithm 1, by letting agent 1 choose even if $y=z$ (in which case agent 1 would cut in $x=0,y=1,z=1$, and get the entire cake). Tie-breaking matters: even the Dubins-Spanier protocol fails to guarantee SPNE existence due to a curious tie-breaking issue~\cite{BM13}.

To accommodate more powerful tie-breaking rules, we slightly augment GCC protocols, by extending their ability to compare cuts in case of a tie. Specifically, we can assume without loss of generality that the \emph{If-Else} statements of a GCC protocol are specified only with weak inequalities (as an equality can be specified with two inequalities and a strong inequality via an equality and weak inequality), which involve only pairs of cuts. We consider \emph{informed GCC protocols}, which are capable of using \emph{If-Else} statements of the form ``\emph{if} [$x<y$ or ($x=y$ and history of events $\in \mathcal{H}$)] \emph{then}''. That is, when cuts are made in the same location and cause a tie in an \emph{If-Else}, the protocol can invoke the power to check the entire history of events that have occurred so far. We can recover the $x<y$ and $x\leq y$ comparisons of ``uninformed'' GCC protocols by setting $\mathcal{H}$ to be empty or all possible histories, respectively. Importantly, the history can include where cuts were made exactly, and not simply where in relation to each other.


We say that an informed GCC protocol $\mathcal{P}'$ is \emph{equivalent up to tie-breaking} to a GCC protocol $\mathcal{P}$ if they are identical, except that some inequalities in the \emph{If-Else} statements of $\mathcal{P}$ are replaced with informed inequalities in the corresponding \emph{If-Else} statements of $\mathcal{P}'$. That is, the two protocols are possibly different only in cases where two cuts are made at the exact same point. 

For example, in Algorithm 1, the statement ``\emph{if} $x<y<z$ \emph{then}'' can be specified as ``\emph{if} $x<y$ \emph{then if} $y<z$ then''. We can obtain an informed GCC protocol that is equivalent up to tie-breaking by replacing this statement with ``\emph{if} $x<y$ \emph{then if} $y\leq z$ then'' (here we are not actually using augmented tie-breaking). In this case, the modified protocol may feel significantly different from the original --- but this is an artifact of the extreme simplicity of Algorithm 1. Common cake cutting protocols are more complex, and changing the tie-breaking rule preserves the essence of the protocol.

We are now ready to present our second main result.  

\begin{theorem}
	\label{thm:ties}
	For any $n$-agent GCC protocol $\mathcal{P}$ with a bounded number of steps and any $n$ valuation functions $V_1,\ldots,V_n$, there exists an informed GCC protocol $\mathcal{P}'$ that is equivalent to $\mathcal{P}$ up to tie-breaking, such that the game induced by $\mathcal{P}'$ and $V_1,\ldots,V_n$ has an SPNE.
\end{theorem}

Intuitively, we can view $\mathcal{P}'$ as being ``undecided'' whenever two cuts are made at the same point, that is, $x=y$: it can adopt either the $x<y$ branch or the $x>y$ branch --- there \emph{exists} an appropriate decision. The theorem tells us that for any given valuation functions, we can set these tie-breaking points in a way that guarantees the existence of an SPNE. In this sense, the tie-breaking of the protocol is \emph{informed} by the given valuation functions. Indeed, this interpretation is plausible as we are dealing with a game of perfect information. 

The proof of Theorem~\ref{thm:ties} is somewhat long, and has been relegated to \ref{app:ties}. This proof is completely different from the proof of Theorem~\ref{thm:existence}; in particular, it relies on real analysis instead of backward induction on a discretized space. The crux of the proof is the development of an auxiliary notion of \emph{mediated games} (not to be confused with Monderer and Tennenholtz's \emph{mediated equilibrium}~\cite{DM09}) that may be of independent interest. We show that mediated games always have an SPNE. The actions of the mediator in this SPNE are then reinterpreted as a tie-breaking rule under an informed GCC protocol. In the context of the proof it is worth noting that some papers prove the existence of SPNE in games with infinite action spaces (see, e.g., \cite{Harr85,HL87}), but our game does not satisfy the assumptions required therein.

%
%




\section{Fair Equilibria}
\label{sec:ef}

%

The existence of equilibria (Theorems~\ref{thm:existence} and \ref{thm:ties}) gives us a tool for predicting the strategic outcomes of cake cutting protocols. In particular, classic protocols provide fairness guarantees when agents act honestly; but do they provide any fairness guarantees in equilibrium?

We first make a simple yet crucial observation. In a proportional protocol, every agent is guaranteed a value of at least $1/n$ regardless of what the others are doing. Therefore, in every NE (if any) of the protocol, the agent still receives a piece worth at least $1/n$; otherwise it can deviate to the strategy that guarantees it a utility of $1/n$ and do better. Similarly, an $\varepsilon$-NE must be $\varepsilon$-proportional, i.e., each agent must receive a piece worth at least $1/n-\varepsilon$. Hence, classic protocols such as Dubins-Spanier, Even-Paz, and Selfridge-Conway guarantee (approximately) proportional outcomes in any (approximate) NE (and of course this observation carries over to the stronger notion of SPNE).

One may wonder, though, whether the analogous statement for envy-freeness holds; the answer is negative. We demonstrate this via the Selfridge-Conway protocol --- a 3-agent envy-free protocol, which is given in its truthful, non-GCC form as Algorithm 3. To see why the protocol is envy free, note that the division of three pieces in steps 4, 5, and 6 is trivially envy free. For the division of the trimmings in step 9, agent $i$ is not envious because it chooses first, and agent $j$ is not envious because it was the one that cut the pieces (presumably, equally according to its value). In contrast, agent 1 may prefer the piece of trimmings that agent $i$ received in step 9, but overall agent 1 cannot envy $i$, because at best $i$ was able to ``reconstruct'' one of the three original pieces that was trimmed at step 2, which agent 1 values as much as the untrimmed piece it received in step 6. 

\begin{algorithm}[t]
	\label{alg:SC}
	\begin{algorithmic}[1]
		\caption{Selfridge-Conway: an envy-free protocol for three agents.}
		\STATE Agent $1$ cuts the cake into three equal parts in the agent's value.
		\STATE Agent $2$ trims the most valuable of the three pieces such that there is a tie with the two most valuable pieces.
		\STATE Set aside the trimmings.
		\STATE Agent $3$ chooses one of the three pieces to keep.
		\STATE Agent $2$ chooses one of the remaining two pieces to keep --- with the stipulation that if the trimmed piece is not taken by agent $3$, agent $2$ must take it.
		\STATE Agent $1$ takes the remaining piece.
		\STATE Denote by $i \in \{2, 3\}$ the agent which received the trimmed piece, and $j = \{2, 3\}\setminus\{i\}$.
		\STATE Agent $j$ now cuts the trimmings into three equal parts in the agent's value.
		\STATE Agents $i$, $1$, and $j$ choose one of the three pieces to keep in that order.
	\end{algorithmic}
\end{algorithm}

We construct an example by specifying the valuation functions of the agents and their strategies, and arguing that the strategies are in SPNE. The example will have the property that the first two agents receive utilities of $1$ (i.e. the maximum value). Therefore, we can safely assume their play is in equilibrium; this will allow us to define the strategies only on a small part of the game tree. In contrast, agent 3 will deviate from its truthful strategy to gain utility, but in doing so will become envious of agent 1.

In more detail, suppose after agent 2 trims the three pieces we have the following.
\begin{itemize}
	\item Agent $1$ values the first untrimmed piece at $1$, and all other pieces and the trimmings at $0$.
	\item Agent $2$ values the second untrimmed piece at $1$, and all other pieces and the trimmings at $0$.
	\item Agent $3$ values the untrimmed pieces at $1/7$ and $0$, the trimmed piece at $1/14$, and the trimmings at $11/14$.
\end{itemize}
Now further suppose that if agent 3 is to cut the trimmings (i.e. take on the role of $j$ in the protocol), then the first two agents always take the pieces most valuable to agent $3$. Thus, if agent 3 does not take the trimmed piece it will achieve a utility of at most $1/7 + (11/14)(1/3) = 119/294$ by taking the first untrimmed piece, and then cutting the trimmings into three equal parts. On the other hand, if agent 3 takes the trimmed piece of worth $1/14$, agent 2 cuts the trimmings into three parts such that one of the pieces is worth $0$ to agent $3$, and the other two are equivalent in value (i.e. they have values $(11/14)(1/2) = 11/28$). Agents $1$ and $3$ take these two pieces. Thus, in this scenario, agent $3$ receives a utility of $1/14 + 11/28 = 13/28$ which is strictly better than the utility of $119/294$. Agent $3$ will therefore choose to take the trimmed piece. However, in this outcome agent $1$, from the point of view of agent $3$, receives a piece worth $1/7 + 11/28 = 15/28$ and so agent 3 will indeed be envious.

The foregoing example shows that envy-freeness is not guaranteed when agents strategize, and so it is difficult to produce envy-free allocations when agents play to maximize their utility. A natural question to ask, therefore, is whether there are any GCC protocols such that all SPNE are envy-free, and existence of SPNE is guaranteed. This remains an open question, but we do give an affirmative answer for the weaker solution concept of NE in the following theorem, whose proof appears in \ref{app:ef}.

\begin{theorem}
	\label{thm:ef}
	There exists a GCC protocol $\mathcal{P}$ such that on every cake cutting instance with strictly positive valuation functions $V_1,\ldots,V_n$, an allocation $X$ is the outcome of a NE of the game induced by $\mathcal{P}$ and $V_1,\ldots,V_n$ if and only if $X$ is an envy-free contiguous allocation that contains the entire cake.
\end{theorem}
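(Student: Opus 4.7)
The plan is to construct a specific GCC protocol $\mathcal{P}$ in two stages and analyze its equilibria in both directions of the equivalence. In the \emph{cut stage}, agent $1$ makes $n-1$ cuts $c_1 < \cdots < c_{n-1}$ in $[0,1]$, producing $n$ contiguous pieces $P_1,\ldots,P_n$ that together cover the entire cake. In the \emph{selection stage}, the remaining agents are offered a sequence of \emph{Choose} operations whose branching, realized via GCC if-else statements on the relative order of cut points and the indices of prior choices, either commits a direct piece-to-agent assignment or enters a \emph{challenge branch}. The challenge branch is a sub-protocol that reallocates pieces in a way calibrated to punish both the cutter (when the cuts cannot induce an envy-free assignment) and any challenging agent (when no strict improvement is actually possible from the current candidate assignment).

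For the direction \emph{every NE outcome is a contiguous envy-free allocation covering $[0,1]$}, contiguity and coverage of the cake will be immediate from the design of the cut stage. Envy-freeness I would prove by contradiction: suppose agent $i$ envies agent $j$ at some NE. If $i \neq 1$, I would exhibit a deviation in the selection stage that lets $i$ obtain (a piece at least as good as) $j$'s piece, either by choosing it directly or by entering the challenge branch to force a swap; strict envy, together with the assumption of strictly positive valuations, makes such deviations strictly profitable. If $i=1$, I would use a two-step argument: agent $1$ can locally perturb its cuts so that, after the other agents' best responses, the piece that lands with agent $1$ strictly dominates its original share, again contradicting the NE property.

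For the converse, given any contiguous envy-free allocation $X$ with cut points $c_j^\ast$ and assignment $\sigma^\ast$, I would build a strategy profile in which agent $1$ cuts at the $c_j^\ast$ and each other agent $i$ takes piece $P_{\sigma^\ast(i)}$ in the selection stage, never invoking the challenge branch. I would then verify no agent has a profitable deviation: for agents $i \neq 1$, alternative selections are weakly worse than $P_{\sigma^\ast(i)}$ because $X$ is envy-free, and any challenge deviation is rendered unprofitable by the punishment structure; for agent $1$, any perturbation of the cuts that strictly improves its own utility must create envy in some other agent, who then has its own profitable deviation by the forward-direction argument, contradicting that we are describing a NE.

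The main obstacle, and what I expect will take most of the work, is the design of the challenge/punishment mechanism. Because GCC protocols can only branch on combinatorial information --- orderings of cut points and identities of chosen pieces --- and never on cardinal valuations, punishments must be triggered purely combinatorially, yet must be calibrated so that (i) every non-envy-free partition offers some agent a strictly profitable challenge, while (ii) every envy-free partition offers none. The most delicate sub-case is ruling out ``phantom'' NE in which two agents both most-prefer the same piece and one silently settles for an inferior alternative under threat of punishment; the challenge branch must be structured so that such settlements are always breakable. Achieving this calibration exactly --- so that the NE set coincides \emph{precisely} with the set of contiguous envy-free allocations rather than some proper super- or subset --- is the crux of the proof.
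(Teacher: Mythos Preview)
Your outline has two genuine gaps, and it also diverges substantially from the paper's construction in a way that makes your anticipated difficulties much worse than they need to be.

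\textbf{Logical error in the converse direction.} In showing that a given envy-free allocation $X$ is a NE outcome, your treatment of agent~1's deviations is flawed. You write that if agent~1 perturbs its cuts to gain utility, this ``must create envy in some other agent, who then has its own profitable deviation \ldots\ contradicting that we are describing a NE.'' But in checking whether a profile is a NE, when agent~1 deviates the other agents' strategies are held \emph{fixed}; they do not re-optimize. Whether the post-deviation state is itself a NE is irrelevant. You must instead specify the other agents' strategies so that \emph{under those fixed strategies}, any deviation by agent~1 from the prescribed cuts leaves agent~1 no better off. Your outline gives no mechanism for this. (A similar confusion appears in your forward direction for $i=1$, where you invoke ``the other agents' best responses'' to agent~1's perturbation.)

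\textbf{The unspecified challenge branch is the whole proof.} You correctly identify that the crux is a combinatorially-triggered punishment that destabilizes exactly the non-envy-free partitions, but you do not supply one. The paper's device is worth knowing because it dissolves every difficulty you list. In the paper's \emph{Thieves Protocol}, agent~1 does not merely cut; it demarcates a full \emph{assignment} $X_1,\ldots,X_n$. Then each agent $i$ in turn makes two cuts $w_i,z_i$; if $[w_i,z_i]\cap X_j$ is a nonempty \emph{strict} subset of some $X_j$, agent~$i$ ``steals'' that subset and the protocol halts with everyone else getting nothing. Otherwise each agent receives its assigned $X_i$. Two observations do all the work. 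First, any profile in which stealing occurs cannot be a NE, because a strict subset can always be enlarged slightly within $X_j$ (here strict positivity is used), so the thief can always do strictly better; this single stroke eliminates all the ``phantom'' equilibria you worry about. Second, for the converse, the other agents' strategies are set to cut at the \emph{fixed} boundary points of $X$ regardless of what agent~1 does; hence any change to the piece boundaries by agent~1 causes some $[w_i,z_i]\cap X'_j$ to be a nonempty strict subset, stealing is triggered, and agent~1 gets nothing. This is exactly the missing ingredient that makes agent~1's deviation analysis go through without the circularity in your argument.
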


Crucially, an envy-free contiguous allocation is guaranteed to exist~\cite{Strom80}, hence the set of NE of protocol $\mathcal{P}$ is nonempty. 

Theorem~\ref{thm:ef} is a positive result \emph{\`a la} implementation theory (see, e.g., \cite{Mask99}), which aims to construct games where the NE outcomes coincide with a given specification of acceptable outcomes for each constellation of agents' preferences (known as a \emph{social choice correspondence}). Our construction guarantees that the NE outcomes coincide with (contiguous) envy-free allocations, i.e. in this case the envy-freeness criterion specifies which outcomes are acceptable.

That said, the protocol $\mathcal{P}$ constructed in the proof of Theorem~\ref{thm:ef} is interesting theoretically, but it remains to be determined when its Nash equilibria arise in practice. This further motivates efforts to find an analogous result for SPNE. If such a result is indeed feasible, a broader, challenging open question would be to characterize GCC protocols that give rise to envy-free SPNE, or at least provide a sufficient condition (on the protocol) for the existence of such equilibria. 

%



\newpage

\appendix

\section{Proof of Theorem~\ref{thm:existence}}
\label{app:existence}

	Let $\varepsilon > 0$, and let $f(n)$ be an upper bound on the number of operations (i.e., on the height of the game tree) of the protocol. 
	Define a grid, $\mathcal{G}_1$, such that every cell on the grid is worth at most
	$\frac{\varepsilon}{2f(n)^2}$ to each agent.
	For every $n$, let
	$K$ denote the maximum number of cut operations, where $0 \leq K \leq f(n)$. For each $i \in \{1, \ldots, K\}$, we define the grid $\mathcal{G}_i$ so that the following properties are satisfied:
	\begin{itemize}
		\item The grids are nested, i.e.,
		$\{0, 1\} \subset \mathcal{G}_1 \subset \mathcal{G}_2 \subset \ldots \subset \mathcal{G}_K$.
		\item There exists a unique point $z \in \mathcal{G}_{i+1}$
		between any two consecutive points $x, y \in \mathcal{G}_i$, such
		that $x < z < y$ and $z \not \in \mathcal{G}_i$, for every $i\in\{1, \ldots, K-1\}$.
		\item
		Each cell on $\mathcal{G}_i$ is worth at most $\frac{\varepsilon}{2f(n)^2}$ to any agent, for all $i \in \{1, \ldots, K\}$.
	\end{itemize}
	
	Having defined the grids, we compute the backward induction outcome on the discretized cake, where the $i$-th \emph{Cut} operation can only be
	made on the grid $\mathcal{G}_i$. We will show that this outcome is an $\varepsilon$-SPNE, even though agents could deviate by cutting anywhere on the cake. 
	%
	%
	%
	On the continuous cake, the agents play a perturbed version of the
	idealized game from the grid $\mathcal{G}$, but
	maintain a mapping between the perturbed game
	and the idealized version throughout
	the execution of the protocol, such that each cut point from the continuous cake is mapped to a grid point that approximates it within a very small (additive) error. Thus when determining the next action, the agents use the idealized grid as a reference.
	The order of the cuts is the same in the ideal and perturbed game, however the values of the pieces may differ
	by at most $\varepsilon/f(n)$.
	
	
	We start with the following useful lemma. (For ease of exposition, in the following we refer to $[x,y]$ as the segment between points $x$ and $y$, regardless of whether $x < y$ or $y \leq x$.)
	
	\begin{lemma} \label{lem:map}
		Given a sequence of cut points $x_1, \ldots, x_k$ and nested grids $\mathcal{G}_1 \subset \ldots \subset \mathcal{G}_k$ with cells worth 
		at most $\frac{\epsilon}{4 f(n)^2}$ to each agent,
		there exists a map 
		$\mathcal{M}: \{x_1, \ldots, x_k\} \rightarrow \mathcal{G}_k$ such that:
		\begin{enumerate}[(1)]
			\item For each $i \in \{1, \ldots, k\}$, $\mathcal{M}(x_i) \in \mathcal{G}_i$.
			%
			\item The map $\mathcal{M}$ is order-preserving. Formally, for all $i,j \in \{1, \ldots, k\}$, $x_i < x_j$ $\iff$ $\mathcal{M}(x_i) < \mathcal{M}(x_j)$ and 
			$x_i = x_j$ $\iff$ $\mathcal{M}(x_i) = \mathcal{M}(x_j)$. 
			
			\item The piece $[x_i, \mathcal{M}(x_i)]$ is ``small'', that is: $V_l([x_i, \mathcal{M}(x_i)]) \leq \frac{k \epsilon}{ 2f(n)^2}$, for each agent $l \in N$.
			
		\end{enumerate}
	\end{lemma}
	\begin{proof}
		We prove the statement by induction on the number of cut points $k$.
		
		\emph{Base case}: We consider a few cases. If $x_1 \in \mathcal{G}_1$, then define $\mathcal{M}(x_1) := x_1$.
		Otherwise, let $R(x_1) \in \mathcal{G}_1$ be the leftmost point on the grid $\mathcal{G}_1$ to the right of $x_1$. 
		If $R(x_1) \neq 1$, define $\mathcal{M}(x_1) := R(x_1)$;
		else, let $L(x_1)$ denote the rightmost point on $\mathcal{G}_1$ strictly to the left of $1$ and
		define $\mathcal{M}(x_1) := L(x_1)$.
		To verify the properties of the lemma, note that:
		\begin{enumerate}[(1)]
			\item $\mathcal{M}(x_1) \in \mathcal{G}_1$.
			\item The map $\mathcal{M}$ is order-preserving since there is only one point.
			\item $V_l([x_1, \mathcal{M}(x_1)]) \leq \frac{\epsilon}{ 2f(n)^2}$ for each agent $l \in N$ since the grid $\mathcal{G}_1$ has (by construction)
			the property that each cell is worth at most $\frac{\epsilon}{2 f(n)^2}$ to each agent, and the interval $[x_1, \mathcal{M}(x_1)]$ is contained in a cell. 
		\end{enumerate}
		
		\emph{Induction hypothesis}: Assume that a map $\mathcal{M}$ with the required properties exists for any sequence of $k - 1$ cut points.
		
		\emph{Induction step}: Consider any sequence of $k$ cut points $x_1, \ldots, x_k$. By the induction hypothesis, we can map each cut point $x_i$ to a grid representative $\mathcal{M}(x_i) \in \mathcal{G}_i$, for all $i \in \{1, \ldots, k-1\}$,
		in a way that preserves properties 1--3. We claim that the map $\mathcal{M}$ on the points $x_1, \ldots, x_{k-1}$
		can be extended to the $k$-th point, $x_k$, such that the entire sequence $\mathcal{M}(x_1), \ldots, \mathcal{M}(x_k)$ satisfies the requirements of the lemma.
		We consider four exhaustive cases.
		\begin{enumerate}[(a)]
			\item There exists $i \in \{1, \ldots, k-1\}$ such that $x_k = x_i$. Then define $\mathcal{M}(x_k) := \mathcal{M}(x_i)$.
			
			\item There exists $i \in \{1, \ldots, k-1\}$ such that $x_i<x_k$, but $\cM(x_i)\geq x_k$. Let $x_j$ be the rightmost cut such that $x_j<x_k$; because $\cM$ is order-preserving, it holds that $\cM(x_j)\geq x_k$. Let $R(\cM(x_j))$ be the leftmost point on $\cG_k$ strictly to the right of $\cM(x_j)$, and set $\cM(x_k):=R(\cM(x_j))$. 
			
			Now let us check the conditions. Condition (1) holds by definition. Condition (2) holds because $\cM(x_k)>\cM(x_j)$, and for every $t$ such that $x_t>x_k$, $\cM(x_t)>\cM(x_j)$ and $\cM(x_t)\in \cG_{k-1}$, whereas $\cM(x_k)$ uses a ``new'' point of $\cG_k\setminus G_{k-1}$ that is closer to $\cM(x_j)$. For condition (3), we have that for every $l\in N$, 
			\begin{align*}
			&V_l([x_k,\cM(x_k)])\\
			&\leq V_l([x_j,\cM(x_k)])\\
			&= V_l([x_j,\cM(x_j)])+V_l([\cM(x_j),\cM(x_k)])\\
			&\leq\frac{(k-1)\epsilon}{2 f(n)^2} + \frac{\epsilon}{2f(n)^2}
			\leq  \frac{k \epsilon}{2 f(n)^2},
			\end{align*}
			where the third transition follows from the induction assumption.
			
			\item There exists $i \in \{1, \ldots, k-1\}$ such that $x_i>x_k$, but $\cM(x_i)\leq x_k$. This case is symmetric to case (b).
			
			\item For every $x_i$ such that $x_i<x_k$, $\cM(x_i)\leq x_k$, and for every $x_j$ such that $x_j>x_k$, $\cM(x_j)\geq x_k$. Let $x_i$ and $x_j$ be the rightmost and leftmost such cuts, respectively; without loss of generality they exist, otherwise our task is even easier. 
			
			Let $R(x_k)$ be the leftmost point in $\cG_k$ such that $R(x_k)\geq x_k$, and let $L(x_k)$ be the rightmost point in $\cG_k$ such that $L(x_k)\leq x_k$. Assume first that $\cM(x_j)>R(x_k)$; then set $\cM(x_k):=R(x_k)$. This choice obviously satisfies the three conditions, similarly to the base of the induction. 
			
			Otherwise, $R(x_k)=\cM(x_j)$ (notice that it cannot be the case that $R(x_k)>\cM(x_k)$); then set $\cM(x_k):=L(x_k)$. Let us check that this choice is order-preserving (as the other two conditions are trivially satisfied). Note that $\cM(x_j)\in \cG_{k-1}$, so $R(x_k)\in \cG_{k-1}$. Therefore, it must hold that $L(x_k)\in G_k\setminus G_{k-1}$ --- it is the new point that we have added between $R(x_k)$, and the rightmost point the left of it on $\cG_{k-1}$. Since it is also the case that $\cM(x_i)\in\cG_{k-1}$, we have that $\cM(x_i)<\cM(x_k)<\cM(x_j)$.  
		\end{enumerate}

		By induction, we can compute a mapping with the required properties for $k$ points. This completes the proof of the lemma.
	\end{proof}
	
	Now we can define the equilibrium strategies. Let $x_1, \ldots, x_{k}$ be the history of cuts made at some point during the execution
	of the protocol. By Lemma~\ref{lem:map}, there exists an order-preserving map $\mathcal{M}$ such that each point $x_i$ has a representative 
	point $\mathcal{M}(x_i) \in \mathcal{G}_i$ and the piece $[x_i, \mathcal{M}(x_i)]$ is ``small'', i.e. $$V_l([x_i, \mathcal{M}(x_i)]) \leq \frac{k\epsilon}{2 f(n)^2}\leq \frac{\epsilon}{2 f(n)}$$ for 
	each agent $l \in N$ --- using $k\leq f(n)$.

	Consider any history of cuts $(x_1, \ldots, x_k)$. Let $i$ be the agent that moves next.
	Agent $i$ computes the mapping
	$(\mathcal{M}(x_1), \ldots, \mathcal{M}(x_k))$.
	If the next operation is:
	\begin{itemize}
		
		\item \emph{Choose}:
		agent $i$ chooses the available piece (identified by the symbolic names of the cut points
		it contains and their order) which is optimal in the idealized game, given the current state and
		the existing set of ordered ideal cuts, $\mathcal{M}(x_1), \ldots, \mathcal{M}(x_k)$.
		Ties are broken according to a fixed deterministic scheme which is known to all the agents.
		
		\item \emph{Cut}:
		agent $i$ computes the optimal cut on $\mathcal{G}_{k+1}$, say at $x^{*}_{k+1}$. Then $i$ maps $x^{*}_{k+1}$ back to a point $x_{k+1}$
		on the continuous game, such that $\mathcal{M}(x_{k+1}) = x^{*}_{k+1}$. That is, the cut $x_{k+1}$ (made in step $k+1$)
		is always mapped by the other agents to $x^{*}_{k+1} \in \mathcal{G}_{k+1}$.
		Agent $i$ cuts at $x_{k+1}$.
	\end{itemize}
	
	We claim that these strategies give an $\varepsilon$-SPNE.
	The proof follows from the following lemma, which we show by induction on $t$ (the
	maximum number of remaining steps of the protocol):
	
	\begin{lemma} \label{lem:remainder_opt}
		Given a point in the execution of the protocol from which there are at most $t$ operations left 
		until termination,
		it is $\frac{t\varepsilon}{f(n)}$-optimal to play on the grid.
	\end{lemma}
	\begin{proof}
		Consider any history of play, where the cuts were made at $x_1, \ldots, x_k$. Without loss of generality, assume it is
		agent $i$'s turn to move.
		
		\emph{Base case:} $t=1$. The protocol has at most one remaining step. If it is a cut operation, then no agent receives
		any utility in the remainder of the game regardless of where the cut is made. Thus cutting on the grid ($\mathcal{G}_k$) is optimal.
		If it is a choose operation, then let $Z = \{Z_1, \ldots, Z_s\}$ be the set of pieces that $i$ can choose
		from. Agent $i$'s strategy is to map each piece $Z_j$ to its equivalent $\mathcal{M}(Z_j)$ on the grid $\mathcal{G}_k$, and choose
		the piece that is optimal on $\mathcal{G}_k$. Recall that $V_q([x_j, \mathcal{M}(x_j)])  \leq \frac{\epsilon}{2 f(n)}$ 
		for each agent $q \in N$.
		Thus if a piece is optimal on the grid, it is $\frac{\varepsilon}{f(n)}$-optimal in the continuous game (adding up the difference on both sides). It follows that $i$ cannot gain
		more than $\frac{\varepsilon}{f(n)}$ in the last step by deviating from the optimal piece on $\mathcal{G}_k$.
		
		\emph{Induction hypothesis:} Assume that playing on the grid is $\frac{(t-1)\varepsilon}{f(n)}$-optimal whenever there
		are at most $t-1$ operations left on every possible execution path of the protocol,
		and there exists one path that has exactly $t-1$ steps.
		
		\emph{Induction step:}
		If the current operation is \emph{Choose}, then by the induction hypothesis, playing on the grid in the remainder of the protocol
		is
		$\frac{(t-1)\varepsilon}{f(n)}$-optimal for all
		the agents, regardless of $i$'s move in the current step. Moreover,
		agent $i$ cannot gain by more than $\frac{\varepsilon}{f(n)}$ by choosing a different piece in the current step,
		compared to
		piece which is optimal on $\mathcal{G}_k$, since $V_{i}([x_{l}, \mathcal{M}(x_{l})]) \leq \frac{\varepsilon}{2f(n)}$ for all $l \in \{1, \ldots, k\}$.
		
		If the current operation is \emph{Cut}, then the following hold:
		\begin{enumerate}
			\item By construction of the grid $\mathcal{G}_{k+1}$, agent $i$
			can induce any given branch of the protocol using a cut in the continuous game if and only if the same branch
			can be induced using a cut on the grid $\mathcal{G}_{k+1}$.
			\item Given that the other agents will play on the grid for the remainder of the protocol, agent $i$ can change the size of at most one piece that it receives down the road
			by at most
			$\frac{\varepsilon}{f(n)}$ by deviating (compared to the grid outcome), since
			$V_{j}([x_{l}, \mathcal{M}(x_{l})]) \leq \frac{\varepsilon}{2f(n)}$ for all $l \in \{1, \ldots, k+1\}$ and for all $j \in N$.
		\end{enumerate}
		Thus by deviating in the current step, agent $i$ cannot gain more than $\frac{t\varepsilon}{f(n)}$. 
	\end{proof}
	Since $t\leq f(n)$, 
	the overall loss of any agent is bounded by $\varepsilon$ by Lemma~\ref{lem:remainder_opt}. 
	We conclude that playing on the grid is $\varepsilon$-optimal for all the agents, which completes the proof of the theorem. \qed

\section{Proof of Theorem~\ref{thm:ties}}
\label{app:ties}

Before we begin, we take this moment to formally introduce the auxiliary concept of a \emph{mediated game} in an abstract sense. We will largely distance ourselves from the specificity of GCC games here and work in a more general model. We do this for two purposes. First, it allows for a cleaner view of the techniques; and second, we believe such general games may be of independent interest. We begin with a few definitions.

\begin{definition}
	In an extensive-form game, an \emph{action tuple} is a tuple of actions that describe an outcome of the game. For example, the action tuple $(a_1, ..., a_r)$ states that $a_1$ was the first action to be played, $a_2$ the second, and $a_r$ the last.
\end{definition}

\begin{definition}
	\label{def:kspne}
	Given an action tuple, the $k^{th}$ action is said to be SPNE if the subtree of the game tree rooted where the first $k - 1$ actions are played in accordance to the action tuple is induced by some SPNE strategy profile. Furthermore, call such an action tuple \emph{$k$-SPNE}.
\end{definition}

Note that if the $k^{th}$ action is SPNE, so too are all actions succeeding it in the action tuple. To clarify Definition~\ref{def:kspne}, note that strategies of an extensive-form game are defined on every possible node of the game tree, so a $k$-SPNE action tuple can be equivalently defined as being an SPNE of the subgame rooted at the $k^{th}$ action.

With these definitions in hand, we can now describe the games of interest.

\begin{definition}
	We call an extensive-form game a \emph{mediated game} if the following conditions hold:
	\begin{enumerate}
		\item The set of agents consists of a single special agent, referred to as the \emph{mediator}, and some finite number $n$ of other regular agents. Intuitively, the mediator is an agent who is overseeing the proper execution of a protocol.
		\item The height $h$ of the game tree is bounded.
		\item Every agent's utility is bounded.
		\item Starting from the first or second action, the mediator plays every second action (and only these actions).
		\item Every action played by the mediator shares the same action space:
		$$\{0, ..., n\} \times \left([0, 1]^2 \cup 2^{\{1, ..., h\}}\right).$$
		This represents the agent who plays next ($0$ represents ending the game), and the interval which represents their action space  or the allowed pieces they may choose from.
		\item The mediator's utility is binary (i.e. it is in $\{0, 1\}$) and is described entirely by the notion of \emph{allowed edges}. This is a set of edges in the game tree such that the mediator's utility is $1$ iff it plays edges only in this set. Importantly, this set has the property that for every allowed edge, each grandchild subtree (i.e. subtree that represents the next mediator's action) must have at least one allowed edge from its root. Intuitively, these edges are the ones that follow the protocol the mediator is implementing.
		\item A regular agent's utility is continuous\footnote{The notions of convergence, compactness and continuity, which we will utilize often, necessarily assumes our action spaces are defined as metric spaces. Applicable metrics for the action spaces are not difficult to find, but are cumbersome to describe fully. We therefore will not belabour this point much further.} in the action tuple.
		\item \emph{Allowed-edges-closedness}: given a convergent sequence of action tuples where the mediator plays only allowed edges, the mediator must play only allowed edges in the limit action tuple as well.
	\end{enumerate}
\end{definition}

Note that appending meaningless actions (that affect no agent's utility) to a branch of the game tree will not affect the game in any impactful way. Thus, for the sake of convenience, we will assume for any game we consider all leaves of the game occur at the same depth (often denoted by $r$).

We now give a series of definitions and lemmas that culminate in the main tool used in the proof of Theorem~\ref{thm:ties}: all mediated games have an SPNE.


\begin{definition}
	A sequence of action tuples $\left(a^i_1, ..., a^i_r\right)\mid_i$ is said to be \emph{consistent} if for every $j$ the agent who plays action $a_j^i$ is constant throughout the sequence and, moreover, its action spaces are always subsets of $[0, 1]$ or always the same subset of $\{1, ..., h\}$ throughout the sequence.
\end{definition}

\begin{lemma}\label{thm:convergence}
	Let $\left(a^i_1, ..., a^i_r\right)\mid_i$ be a sequence of action tuples in a mediated game. Then there is a convergent subsequence.
\end{lemma}
\begin{proof}
	Due to the finite number of agents and bounded height of the game, we can find an infinite consistent subsequence $\vec{b}^i\mid_i = \left(b^i_1, ..., b^i_r\right)\mid_i$. It suffices to show this subsequence has a convergent subsequence of its own.
	It is fairly clear that we can find a convergent subsequence via compactness arguments, but there is a slight caveat: we must show that the limit action tuple is legal. That is, if the limit action tuple is $(a_1, ..., a_r)$ we must show that for every $i < r$ such that the mediator plays action $i$, action $i + 1$ is played by the agent prescribed by $a_i$, and within the bounds prescribed by it.
	We will prove this by induction.
	
	\emph{Base hypothesis:} First $0$ actions have a convergent subsequence --- this is vacuously true. 
	
	\emph{Induction hypothesis:} 
	Assume there exists a subsequence such that the first $k$ actions converge legally.
	
	\emph{Induction step}: 
	We wish to show that there exists a subsequence such that the first $k + 1$ actions converge. By the inductive assumption, there exists a subsequence $\vec{c}^i\mid_i$ such that the first $k$ actions converge. Now suppose $p$ plays the $k + 1^{th}$ action. If $p$ is the mediator, then the action space is indifferent to actions played previously and is compact. Thus, the $\vec{c}^i\mid_i$ must have a convergent subsequence such that the $k + 1^{th}$ element of the action tuple converges and so we are done. 
	
	Alternatively, if $p$ is a regular agent, the action space is not necessarily indifferent to previous actions. If the action spaces are always the same subset of $\{1, ..., h\}$, then we are clearly done. We therefore need only consider the case where the action spaces will be contained in $[0, 1]$. Due to the compactness of this interval, there will be a convergent subsequence of $\vec{c}^i\mid_i$ such that the $k + 1^{th}$ action converges to some $\gamma \in [0, 1]$. Call this subsequence $\vec{d}^i\mid_i$. 
	
	We argue that $\gamma$ is in the limit action space of the $k + 1^{th}$ action. For purposes of contradiction, assume this is false. Let $\delta$ be the length from $\gamma$ to the closest point in the limit action space (i.e. the action space in the limit given by the $k^{th}$ action played by the mediator). Then there exists some $M$ such that after the $M^{th}$ element in $\vec{d}^i\mid_i$, the closest point in the $k + 1^{th}$ action space to $\gamma$ is at least $\delta/2$ away. Moreover, there exists some $N$ such that after the $N^{th}$ element in $\vec{d}^i\mid_i$ the $k + 1^{th}$ action is no further than $\delta/3$ to $\gamma$. Elements of $\vec{d}^i\mid_i$ after element $\max(M, N)$ then simultaneously must have the $k + 1^{th}$ action space be at least $\delta/2$ away from $\gamma$ and have a point at most $\delta/3$ away from $\gamma$. This is a clear contradiction.
\end{proof}

\begin{lemma}\label{thm:tails}
	For every $k$, if we have a convergent sequence of action tuples where the $k^{th}$ action from the end is SPNE, then the $k^{th}$ action from the end for the limit action tuple is also SPNE. That is, for every $k$, convergent sequences of $(r - k + 1)$-SPNE action tuples are $(r - k + 1)$-SPNE.
\end{lemma}
\begin{proof}
	We prove the result by induction on $k$.
	
	\emph{Base Case $(k = 0)$:}
	This is vacuously true.
	
	\emph{Induction hypothesis $(k = m)$:} Assume convergent sequences of $(r - m + 1)$-SPNE action tuples are $(r - m + 1)$-SPNE.
	
	\emph{Induction step $(k = m + 1)$:} 
	Let $\vec{a}^i\mid_i = (a^i_1, ..., a^i_r)\mid_i$ be a convergent sequence of $(r - m)$-SPNE action tuples with the limit action tuple $(a_1, ..., a_r)$. We wish to show that if all actions before the last $m + 1$ actions play their limit actions, then the remaining $m + 1$ actions are SPNE --- note that by Lemma \ref{thm:convergence} we know that the limit sequence is a valid action tuple.
	
	Let $p$ be the agent that commits the $m + 1^{th}$ action from the end. If $p$ is the mediator, then by the definition of mediated games the desired statement is true (specifically via the allowed-edges-closedness condition). Now suppose instead that $p$ is not the mediator, and simply a regular agent. We show if the $m + 1^{th}$ action from the end took on some other valid value $\alpha \neq a_{r - m}$, there exists SPNE strategies for the remaining $m$ actions such that $p$ achieves a utility no higher than had it stuck with the limit action of $a_{r - m}$.
	
	So suppose the $m + 1^{th}$ action from the end in the $i^{th}$ element of the sequence is $\alpha^i$ such that $\lim_{i \rightarrow \infty}\alpha^i = \alpha$. Since $\vec{a}^i\mid_i$ is a sequence of $(r - m)$-SPNE action tuples, we can construct the sequence:
	\begin{equation*}
	\vec{b}^i\mid_i = (a^i_1, ..., a^i_{r - m - 1}, \alpha^i, \tilde{a}^i_1, ..., \tilde{a}^i_m)\mid_i
	\end{equation*}
	where the $\tilde{a}^i_j$ are SPNE actions such that $p$ achieves at most the utility achieved by instead playing $a^i_{r - m}$. Via Lemma \ref{thm:convergence}, $\vec{b}^i\mid_i$ must have a convergent subsequence --- call $\vec{c}^i\mid_i$ and indexed by increasing function $\sigma$.  That is, $\vec{c}^{i} = \vec{b}^{\sigma(i)}$. $\vec{c}^i\mid_i$ is then a convergent sequence of $(r - m + 1)$-SPNE action tuples and thus, by the inductive assumption, its limit action tuple is also an $(r - m + 1)$-SPNE.
	
	Now consider the limit action tuple $(a_1, ..., a_r)$ (of $\vec{a}^i\mid_i$) and the limit action tuple of $\vec{c}^i\mid_i$ denoted by $(c_1, ..., c_r)$. Note that:
	\begin{enumerate}
		\item $\forall i < r - m$: $a_i = c_i$.
		\item By the continuity requirement of mediated games (where $V_p$ is the utility function of $p$): 
		\begin{align*}
		&V_p(a_1, ..., a_r)\\
		&= \lim_{i \rightarrow \infty} V_p(a^i_1, ..., a^i_r)\\
		&= \lim_{i \rightarrow \infty} V_p(a^{\sigma(i)}_1, ..., a^{\sigma(i)}_r)\\
		&\geq \lim_{i \rightarrow \infty} V_p(a^{\sigma(i)}_1, ... a^{\sigma(i)}_{r - m - 1}, \alpha^{\sigma(i)}, \tilde{a}^{\sigma(i)}_{r - m + 1}, \ldots, \tilde{a}^{\sigma(i)}_r)\\
		&= \lim_{i \rightarrow \infty} V_p(c^i_1, ..., c^i_r)\\
		&= V_p(c_1, ..., c_r).
		\end{align*}
	\end{enumerate}
	These two points imply that we can set SPNE strategies for the remaining $m$ actions such that the utility of $p$ playing $\alpha$ is less than or equal to if it plays $a_{r - m}$ for the $m + 1^{th}$ action from the end (when the actions preceding the $m + 1^{th}$ action from the end are those given in the limit action tuple $(a_1, ..., a_r)$). As the $\alpha$ was arbitrary, the $m + 1^{th}$ action from the end of $(a_1, ..., a_r)$ can be made an SPNE action, which completes the proof.
\end{proof}


\begin{lemma}
	\label{thm:med_spne}
	All mediated games have an SPNE.
\end{lemma}
\begin{proof}
	We prove the lemma via induction on the height of the game tree. Note that this is possible as mediated games (like extensive-form games) are recursive: the children of a node of a mediated game are mediated games.
	
	\emph{Base case} (at most $0$ actions):
	This is vacuously true. 
	
	\emph{Induction hypothesis} (at most $k$ actions): 
	Assume we have shown that any mediated game with a game tree of height at most $k$ has an SPNE.
	
	\emph{Induction step} (at most $k + 1$ actions): 
	Let $p$ be the agent that commits the first action. If $p$ is the mediator, any action that is an allowed edge will be SPNE; and if no such action exists, any action will be SPNE (as the mediator is doomed to a utility of $0$). Now suppose $p$ is not the mediator.
	
	Assume by the inductive assumption, once $p$ makes its move, all remaining (at most) $k$ actions are SPNE actions. By the definition of a mediated game, $p$'s utility is bounded. Then the least upper bound property of $\mathbb{R}$ implies that $p$'s utility as a function of the first action must have a supremum $S$. Via the axiom of choice, we construct a sequence of possible actions for the first action that approaches $S$ in $p$'s utility. That is, we have some sequence $x^i\mid_i$ such that if $p$ plays $x^i$ for the first action, it achieves some utility $f(x^i)$ --- where $\lim_{i \rightarrow \infty} f(x^i) = S$. Moreover, let $g(x^i)$ map the action $x^i$ to a tuple of the remaining actions --- which are SPNE. By Lemma \ref{thm:convergence} $(x^i, g(x^i))\mid_i$ must have a convergent subsequence $(y^i, g(y^i))\mid_i$ that converges to $(y, g(y))$ --- where $y$ is a legal first action and $g(y)$ are legal subsequent actions.
	
	Notice that $(y_i, g(y_i))\mid_i$ is a convergent sequence of $2$-SPNE action tuples and thus by Lemma \ref{thm:tails}, $(y, g(y))$ is a $2$-SPNE action tuple as well. Furthermore, note that by the continuity requirement of mediated games, $y$ must give $p$ a utility of $S$. Therefore, this must be an SPNE action and so we are done.
\end{proof}

With this machinery in hand, we are now ready to complete the proof of Theorem \ref{thm:ties}. Our main task is to make a formal connection between mediated games and (informed) GCC protocols. 

\begin{proof}[Proof of Theorem~\ref{thm:ties}]
	Suppose we have a $n$-agent GCC protocol $\mathcal{P}$ with a bounded number of steps and and set valuations of the agents $V_1, ..., V_n$. Then we wish to prove that there exists an informed GCC protocol $\mathcal{P}'$ that is equivalent to $\mathcal{P}$ up to tie-breaking such that the game induced by $\mathcal{P}'$ and $V_1, ..., V_n$ has an SPNE.
	
	Outfit $\mathcal{P}$ as a game $M$, such that all but the final condition of mediated games are satisfied --- that is, the mediator enforces the rules of $\mathcal{P}$ and achieves utility $1$ if it follows the rules of $\mathcal{P}$ and $0$ otherwise. More explicitly, the mediator plays every second action and upon examination of the history of events (i.e. the ordering of the cuts made thus far, and results of choose queries), decides the next agent to play and their action space based on the prescription of $\mathcal{P}$. To see how all but the last condition is satisfied, we go through them in order.
	\begin{enumerate}
		\item This is by definition.
		\item The height of the tree is twice the height of the GCC protocol.
		\item The mediator's utility is bounded by $1$ by definition, and all other agent's utilities are bounded by $1$ as that is their value of the entire cake.
		\item This is by definition.
		\item When the mediator wishes to ask a \emph{Cut} query to agent $i$ in the interval $[a, b]$, it plays the action $(i, (a, b))$, whereas when it wishes to ask a \emph{Choose} query to agent $i$ giving them the choice between the $x_1^{th}, ..., x_k^{th}$ pieces from the left, it plays the action $(i, \{x_1, ..., x_k\})$. This method of giving choose queries deviates slightly from the definition given in Section \ref{sec:gcc}, but the two representations are clearly equivalent.
		\item The allowed edges are ones that follow the rules of $\mathcal{P}$.
		\item 
		This property is only relevant when considering \emph{Cut} nodes. To establish it, first consider the action in a single \emph{Cut} node, and fix all the other actions. We claim that for every $\epsilon>0$ there exists $\delta=\delta(\epsilon)>0$ \emph{that is independent of the choice of actions in other nodes} such that moving the cut by at most $\delta$ changes the values by at most $\epsilon$. 
		Indeed, let us examine how pieces change as the cut point moves. As long as the cut point moves without passing any other cut point, one piece shrinks as another grows. As the cut point approaches another cut point, the induced piece --- say $k$'th from the left --- shrinks. When the cut point passes another cut point $x$, the $k$'th piece from the left grows larger, or it remains a singleton and another piece grows if there are multiple cut points at $x$. In any case, it is easy to verify that the sizes of various pieces received in \emph{Choose} nodes change by at most $\delta$ if the cut point is moved by $\delta$. Furthermore, note that the number of steps is bounded by $r$ and --- since the value density functions are continuous --- there is an upper bound $M$ on the value density functions such that if $y-x\leq \delta'$ then $V_i([x,y])\leq M\delta'$ for all $i\in N$. Therefore, choosing $\delta\leq \epsilon/(Mr)$ is sufficient. Finally, $V_1,\ldots,V_n$ are continuous even in the actions taken in multiple \emph{Cut} nodes, because we could move the cut points sequentially. 
	\end{enumerate}
	
	We now alter $M$ such that at every branch induced by a comparison of cuts via an \emph{If-Else}, we allow in the case of a tie to follow either branch. Formally, suppose at a branch induced by the statement ``\emph{if} $x\leq y$ then $A$ \emph{else} $B$'' we now set in the case of $x = y$ the edges for both $A$ \emph{and} $B$ as allowed. Then we claim the property of allowed-edges-closedness is satisfied.
	
	To see this, let us consider action tuples. An action tuple where the mediator in $M$ only plays on allowed edges can be viewed as a trace of an execution of $\mathcal{P}$ which records the branch taken on every \emph{If-Else} statement --- though when there is a tie the trace may follow the ``incorrect'' branch. A convergent sequence of such action tuples at some point in the sequence must then keep the branches it chooses in the execution of $\mathcal{P}$ constant --- unless in the limit, the cuts compared in a branch that is not constant coincide. Thus, we have that in the limit, if a branch is constant, the mediator always takes an allowed edge trivially, and otherwise due to our modification of $M$ the mediator still takes an allowed edge. Furthermore, for all actions of the mediator that are not induced by \emph{If-Else} statements, the mediator clearly still plays on allowed edges and so we have proved the claim.
	
	Now as $M$ is a mediated game, it has an SPNE $S$ by Lemma \ref{thm:med_spne}. Let $\mathcal{P}'$ be the informed GCC protocol equivalent to $\mathcal{P}$ up to tie-breaking such that for every point in the game tree of $M$ that represents the mediator branching on an ``\emph{if} $x\leq y$ then $A$ \emph{else} $B$'' statement in the original protocol $\mathcal{P}$, $\mathcal{P}'$ chooses the $A$ or $B$ that $S$ takes in the event of a tie. Then the informedness of the tie-breaking is built into $\mathcal{P}'$ and we immediately see that the SPNE actions of the regular agents in $M$ correspond to SPNE actions in $\mathcal{P}'$.
\end{proof}

\section{Proof of Theorem~\ref{thm:ef}}
\label{app:ef} 

The proof of the theorem uses the Thieves Protocol given by Algorithm 4. In this protocol, agent $1$ first demarcates a contiguous allocation $X=\{X_1, ..., X_n\}$ of the entire cake, where $X_i$ is a contiguous piece that corresponds to agent $i$. This can be implemented as follows. First, agent $1$ makes $n$ cuts such that the $i$-th cut is interpreted as the left endpoint of $X_i$. The left endpoint of the leftmost piece is reset to $0$ by the protocol. Then, the rightmost endpoint of $X_i$ is naturally the leftmost cut point to its right or $1$ if no such point exists. Ties among overlapping cut points are resolved in favor of the agent with the smallest index; the corresponding cut point is assumed to be the leftmost one. Notice that every allocation that assigns nonempty contiguous pieces to all agents can be demarcated in this way.

After the execution of the demarcation step, $X$ is only a tentative allocation. 
Then, the protocol enters a verification round, where each agent $i$ is allowed 
to {\em steal} some non-empty strict subset of a piece (say, $X_j$) demarcated 
for another agent. If this happens (i.e., the if-condition is true) then agent $i$ 
takes the stolen piece and the remaining agents get nothing. This indicates the 
failure of the verification and the protocol terminates. Otherwise, the pieces 
of $X$ are eventually allocated to the agents, i.e., agent $i$ takes $X_i$.

We will require two important characteristics of the protocol. First, it guarantees that no state in which some agent steals can be a NE; this agent can always steal an even more valuable piece. Second, stealing is beneficial for an envious agent.

\begin{proof}[Proof of Theorem \ref{thm:ef}]
	Let $\mathcal{P}$ be the Thieves protocol given by Algorithm 3 and $\mathcal{E}$ be any NE of $\mathcal{P}$.
	Denote by $X$ the contiguous allocation of the entire cake obtained during the demarcation step, where $X_i = [x_i, y_i]$ for all $i \in N$, and let $w_i$ and $z_i$ be the cut points of agent $i$ during its verification round.
	Assume for the sake of contradiction that $X$ is not envy-free.
	Let $k^*$ be an envious agent, where $V_{k^*}(X_{j^*}) > V_{k^*}(X_{k^*})$, for some $j^* \in N$.
	There are two cases to consider: \\
	 
	\begin{algorithm}[t]
		\begin{algorithmic}
			\caption{Thieves Protocol: Every NE induces a contiguous envy-free allocation that contains the entire cake and vice versa.}
			\STATE Agent $1$ demarcates a contiguous allocation $X$ of the cake
			\FOR {$i = 2, \ldots, n, 1$}
			\STATE \textit{{// Verification of envy-freeness for agent $i$}}
			\STATE Agent $i$ \emph{Cuts} in $\{[0,1]\}$ // @$w_i$ 
			\STATE Agent $i$ \emph{Cuts} in $\{[w_i,1]\}$ // @ $z_i$
			\FOR{$j = 1$ to $n$}
			\IF{$\emptyset \not= ([w_i, z_i] \cap X_j) \subsetneq X_j$}
			\STATE \textit{{// Agent $i$ steals a non-empty strict subset of $X_j$}}
			\STATE Agent $i$ \emph{Chooses} from $\{[w_i, z_i] \cap X_j\}$
			\STATE \textbf{exit} \textit{// Verification failed: protocol terminates}
			\ENDIF
			\ENDFOR
			\STATE \textit{{// Verification successful for agent $i$}}
			\ENDFOR
			\FOR{$i = 1$ to $n$}
			\STATE Agent $i$ \emph{Chooses} from $\{X_i\}$
			\ENDFOR
		\end{algorithmic}
		\label{alg1}
	\end{algorithm}
	
	\emph{Case 1}: Each agent $i$ receives the piece $X_i$ in $\mathcal{E}$. This means that, during its verification round, each agent $i$ selects its cut points from the set $\bigcup_{j=1}^n\{x_j,y_j\}$. 
	By the non-envy-freeness condition for $X$ above (and by the fact that the valuation function $V_{k^*}$ is strictly positive), there exist $w'_{k^*}, z'_{k^*}$ such that $x_{j^*} < w'_{k^*} < z'_{k^*} < y_{j^*}$ and $V_{k^*}([w'_{k^*}, z'_{k^*}]) > V_{k^*}([x_{k^*}, y_{k^*}])$. Thus, agent $k^*$ could have been better off by cutting at points $w'_{k^*}$ and $z'_{k^*}$ in its verification round, contradicting the assumption that $\mathcal{E}$ is a NE. \\
	
	\emph{Case 2}: There exists an agent $i$ that did not receive the piece $X_i$. Then, it must be
	the case that some agent $k$ stole a non-empty strict subset $[w''_k,z''_k] = [w_k, z_k]\cap Z_j$ of another piece $X_j$. However,
	agent $k$ could have been better off at the node in the game tree reached in its verification round by making the following marks:
	$w_k' = \frac{x_j + w''_k}{2}$ and $z_k' = \frac{z''_k + y_j}{2}$. Since either $x_j \leq w_{k}'' < z_k'' < y_j$ or $x_j < w_k'' < z_k'' \leq y_j$ (recall that $[w''_k,z_k'']$ is a non-empty strict subset of $X_j$ and the valuation function $V_k$ is strictly positive), it is also true that $V_k([w_k', z_k']) > V_k([w_k'', z_k''])$, again contradicting the assumption that $\mathcal{E}$ is a NE. \\
	
	So, the allocation computed by agent $1$ under every NE $\mathcal{E}$ is indeed envy-free; this completes the proof of the first part of the theorem. \\
	
	We next show that every contiguous envy-free allocation of the entire cake is the outcome of a NE. Let $Z$ be such an allocation, with $Z_i=[x_i,y_i]$ for all $i\in N$. We define the following set of strategies $\mathcal{E}$ for the agents:
	\begin{itemize}
		\item At every node of the game tree (i.e., for every possible allocation that could be demarcated by agent 1), agent $i\geq 2$ cuts at points $w_i=x_i$ and $z_i=y_i$ during its verification round.
		\item Agent 1 specifically demarcates the allocation $Z$ and cuts at points $w_1=x_1$ and $z_1=y_1$ during its verification round.
	\end{itemize}
	Observe that $[w_i,z_i]\cap Z_j$ is either empty or equal to $Z_j$ for every pair of $i,j\in N$. Hence, the verification phase 
	is successful for every agent and agent $i$ receives the piece $Z_i$.
	
	We claim that this is a NE. Indeed, consider a deviation of agent 1 to a strategy that consists of the demarcated allocation $Z'$ (and the cut points $w'_1$ and $z'_1$). First, assume that the set of pieces in $Z'$ is different from the set of pieces in $Z$. Then, there is some agent $k \not= 1$ and some piece $Z'_j$ such that the if-condition $\emptyset \subset [x_k,y_k] \cap Z'_j \subset Z'_j$ is true. Hence, the verification round would fail for some agent $i\in \{2, ..., k\}$ and agent 1 would receive nothing. So, both $Z'$ and $Z$ contain the same pieces, and may differ only in the way these pieces are tentatively allocated to the agents. 
	But in this case the maximum utility agent 1 can get is $\max_{j}{V_1(Z'_j)}$, either by keeping the piece $Z'_1$ or by stealing a strict subset of some other piece $Z'_j$. Due to the envy-freeness of $Z$, we have: $$\max_{j}{V_1(Z'_j)} = \max_j{V_1(Z_j)} = V_1(Z_1),$$
	hence, the deviation is not profitable in this case either.
	
	Now, consider a deviation of agent $i\geq 2$ to a strategy that consists of the cut points $w'_i$ and $z'_i$. If both $w'_i$ and $z'_i$ belong to $\bigcup_{j=1}^n\{x_i,y_i\}$, then $[w'_i,z'_i] \cap Z_j$ is either empty or equal to $Z_j$ for some $j\in N$. Hence, the deviation will leave the allocation unaffected and the utility of agent $i$ will not increase. 
	If instead one of the cut points $w'_i$ and $z'_i$ does not belong to $\bigcup_{j=1}^n\{x_i,y_i\}$, this implies that the condition 
	$$\emptyset \subset [w'_i,z'_i]\cap Z_j \subsetneq Z_j$$ is true for some $j\in N$, i.e., agent $i$ will steal the piece $[w'_i,z'_i]\cap Z_j$. 
	However, the utility $V_i([w'_i,z'_i] \cap Z_j)$ of agent $i$ cannot be greater than $V_i(Z_j)$, which is at most $V_i(Z_i)$ due to the envy-freeness of $Z$. Hence, again, this deviation is not profitable for agent $i$.
	
	We conclude that $\cal E$ is a NE; this completes the proof of the theorem.
\end{proof}

\end{document}